\theoremstyle{definition}
\newtheorem{thm}{Theorem}
\newtheorem{prop}[thm]{Proposition}
\newtheorem{corr}[thm]{Corollary}
\newcommand{\bra}[1]{\langle #1 \rvert}
\newcommand{\ket}[1]{\lvert #1 \rangle}
\newcommand{\projop}[1]{\ket{#1} \bra{#1}}
\newcommand{\defeq}{:=}
\newcommand{\F}{\mathbb{F}}
\newcommand{\with}{\mid}
\newcommand{\one}{\mathds{1}}
\DeclareMathOperator{\Tr}{Tr}
\begin{document}

\title{Quantum secret sharing with qudit graph states}
\author{Adrian Keet}
\author{Ben Fortescue}
\affiliation{%
Institute for Quantum Information Science,\\
University of Calgary, Alberta T2N 1N4, CANADA\\
}%
\author{Damian Markham}
\affiliation{
LTCI-CNRS, Telecom ParisTech,
37/39 rue Dareau, 75014 Paris, France}
\author{Barry C. Sanders}
\affiliation{%
Institute for Quantum Information Science,\\
University of Calgary, Alberta T2N 1N4, CANADA\\
}%

\date{December 21, 2010}

\begin{abstract}
We present a unified formalism for threshold quantum secret sharing using graph states of systems with prime dimension.  We construct protocols
for three varieties of secret sharing: with classical and quantum secrets shared between parties
over both classical and quantum channels.
\end{abstract}
\pacs{03.67.Dd, 03.67.Ac}

\maketitle

\section{Introduction}
``Secret sharing'' refers to an important family of multiparty cryptographic protocols in
 both the classical \cite{Sham79, Schneier96} and the quantum \cite{HBB99,CGL99,Gott00,MS08} contexts.
A secret sharing protocol
comprises a dealer and $n$ players who are interconnected by some set of classical or quantum channels.
The ``secret'' to be shared is a classical string or quantum state and is distributed among the players by the dealer
in such a way that it can only be recovered by certain subsets of players acting collaboratively.  The {\it access structure}
is the set of all subsets of players who can recover the secret, and the {\it adversary structure} corresponds to those subsets that obtain no knowledge of the secret.
There may, in addition,
be external eavesdroppers who should also gain no knowledge of the secret.
Secret sharing protocols have practical
applications in, for example, money transfer \cite{CGL99} and voting \cite{schoen99} schemes.
Recently, a unified formalism for a range of protocols for sharing both
classical and quantum secrets using qubit graph states \cite{HDER05} has been proposed \cite{MS08}.  The formalism allows for both classical and quantum
channels between the dealer and players and covers three common varieties of secret sharing within a single framework.  In this paper we develop
an analogous formalism for the case of systems of prime-dimensional quantum states.

Although qubit systems are generally the most straightforward to describe and the theory of qubit graph states is well-developed, there are good
reasons to consider higher-dimensional systems as well.  If players' shares consist of such systems, it may be more efficient to make direct
use of the larger Hilbert space than to work with encoded qubits.  Additionally, certain access structures are known not to be possible 
using the existing secret sharing schemes on qubit graph states \cite{MS08,Kashefi09} or, in some cases, using any scheme with qubit shares.  An example
of the latter (which can, however, be implemented using qutrits) was given by Cleve, Gottesman, and Lo \cite{CGL99}.
A more broad justification is that one often needs to consider higher-dimensional states in order to derive rigorous general results in quantum information theory,
such as in certain {\it no-go theorems} \cite{Shaari08, Chen08, NFC09}.

Our aim is therefore to find a unified formalism for secret sharing that is not restricted to the qubit case.
As discussed recently \cite{HDER05, KKKS05, BB06}, graph and stabilizer states can be extended to the qudit case for prime dimensions.
Following this approach, we find direct extensions of the existing qubit protocols \cite{MS08} in $d$ dimensions, where $d$ is prime i.e.\ we find protocols for sharing classical secrets with $d$ possible values and secret quantum states within a $d$-dimensional Hilbert space, using graph states shared
by a dealer and players, each with a $d$-dimensional subsystem.  Our work therefore achieves, in higher-dimensional systems, the goals previously achieved for qubits \cite{MS08}, namely, providing a general graph-state unification for sharing classical and quantum secrets using both classical and quantum channels, and in this way extends the protocols to more general access structures.  Note that we are primarily demonstrating how a subset of existing secret sharing protocols, already known to be achievable, can be unified
within our formalism; we are not attempting to address via the formalism any existing limitations in quantum and classical secret sharing.
Additionally, however, we have demonstrated protocols within our formalism which have not been previously shown: the case of three parties sharing
classical secrets distributed over quantum channels (secure or insecure), in which a minimum of two parties are required to recover the secret.

In general, there are many different possible access structures for secret sharing schemes.
In our work we consider the specific case of {\it threshold} secret sharing.  In such a scheme the secret can be recovered if and only if any
$k$ of the $n$ players collaborate to do so (and any set of fewer than $k$ players are denied any knowledge of the secret).
All that is required is that enough players collaborate;
i.e.\ it does not matter which $k$ players do so.  Thus the access structure comprises any set of at least $k$ players, with the remainder forming the adversary structure.
This is denoted as a $(k,n)$ threshold secret sharing scheme.  Note that one can construct arbitrary access structures for both classical \cite{BL88} and quantum \cite{Gott00} secret sharing by concatenating threshold protocols; thus it is sufficient to consider only threshold secret sharing to address general secret sharing.

Here we consider three specific varieties of such schemes previously demonstrated in qubit graph states \cite{MS08}.
We note that all existing forms of secret sharing that have been proposed fall into one of these categories.
\begin{enumerate}
\item CC schemes: The secret is classical, the dealer is connected to the players via private quantum channels and all players are connected by private classical channels.
\item CQ schemes: The secret is classical, the dealer shares public quantum channels with each player and the players
are connected to each other by private classical channels.
\item QQ schemes: The secret is quantum, the dealer shares either private or public quantum channels with each player
and the players are connected to each other by private quantum or classical channels.
\end{enumerate}

In Sec.\ \ref{sec-gstates}, we formally define the structure and graphical representation of graph states
for qudits and explain how classical information can be encoded on to such states and later accessed.
In Sec.\ \ref{sec-locmeas} we show the effect of local ($d$-dimensional) Pauli measurements
on qudit graphs and derive rules (partially shown previously \cite{BB06}) for the
form of the consequent-reduced graph states, analogous to ``local complementation''
in the qubit case.
In Sec.\ \ref{sec-protocols} we demonstrate specific CC, CQ and QQ protocols which may be implemented
in qudit graphs given the properties previously derived (though we note that in the case of the $(n,n)$ QQ scheme, the secret is not perfectly
denied to the adversary structure).  We present, as an example of the power of higher dimensions, the $(2,3)$ case, which is
 not possible with qubit graph state schemes \cite{MS08}, as well as higher-dimensional analogues to those presented in \cite{MS08}.

\section{Graph states for qudits}\label{sec-gstates}
Graph states \cite{BR01} are a class of entangled multipartite states (including the well-known $n$-qubit GHZ state) of wide interest in quantum information.
  Their useful properties include
a convenient graphical representation and characterisation within the stabilizer formalism \cite{Gott97}, as well as their practical application
within information processing, in particular for computation \cite{RB01} and error correction \cite{BB06,Looi08,Beigi09}, and their amenability for the representation of information flow \cite{Kashefi09,Gheorghiu07}. Such states have been created and applied in this way for up to six qubits \cite{WRR05, LGZ07}, and recent work \cite{BB06}
has considered the case of higher-dimensional graph states.  We employ the same framework in sections \ref{sec-ops} and \ref{sec-gs} as that in previous work
\cite{HDER05,KKKS05, BB06} although we use different notation more suited to secret sharing.

\subsection{Graph states}\label{sec-gs}
We now introduce qudit graph states and their graphical representation.
Consider a  finite field $\F_d$ of order $d$, where $d$ is a prime number $>2$, and a
weighted undirected graph i.e.\ a set of $n$ vertices $\mathsf{V} = \{\mathsf{v}_i\}$ joined by
a set of edges $\mathsf{E}=\{\mathsf{e}_{ij}=\{\mathsf{v}_i,\mathsf{v}_j\}\}$.

Each edge $\mathsf{e}_{ij}$ has an assigned ``weight'' $A_{ij}\in \F_d$.  $\mathsf{E}$ only contains
edges with non-zero weights: a weight of zero corresponds to two vertices not being joined
by an edge.  No vertex is joined to itself.  We can summarise this information in an
$n\times n$ ``adjacency matrix'' $A$ with elements $A_{ij}$.
We therefore have $A_{ij}=A_{ji}$ for all $\mathsf{v}_i,\mathsf{v}_j\in \mathsf{V}$
and $A_{ii}=0$ for all $\mathsf{v}_i \in \mathsf{V}$.

We define the computational basis $\{ \ket{j} \with j \in \F_d \}$.
Then the graph state
represented by the above graphical construction is
\begin{equation}
\ket{G}\defeq\prod_{\mathsf{e}_{ij}\in \mathsf{E}}C_{ij}^{A_{ij}}\ket{\overline{0}}^{\otimes n}
\end{equation}
where we define the two-qudit controlled-$Z$ operator
\begin{equation}
C_{ab} \ket{j}_a \ket{k}_b \defeq \omega^{jk} \ket{j}_a \ket{k}_b,
\end{equation}
the operator
\begin{equation}
U \vert i\rangle := \sum_{j \in \mathbb{F}_d} \omega^{ij}
\vert{j}\rangle,
\end{equation}
and the basis
\begin{equation}
\vert\bar{i}\rangle := U^{-1} \vert{i}\rangle \mid i \in \mathbb{F}_d.
\end{equation}

An example graph and associated adjacency matrix are depicted in Fig.\ \ref{fig-graphexample}.
\begin{figure}
\includegraphics[height=4cm]{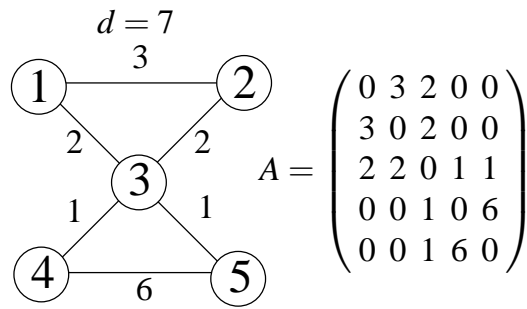}
\caption{A 5-qudit graph in $d=7$ and associated adjacency matrix $A$}\label{fig-graphexample}
\end{figure}

\subsection{Local operators and labelling}\label{sec-ops}
Here, in an analogous way to the qubit case \cite{MS08}, we describe the encoding of classical information onto the graph state by applying local operators
to the individual qudits that are graphically represented by the vertices of the graph.  Here and later we make use of the 
 {\it generalized Pauli  operators} \cite{Pat88,GKP01, BGS02}
\begin{align}
    Z \ket{j} &\defeq \omega^j \ket{j}, \\
    X \ket{j} &\defeq \ket{j + 1}
\end{align}
where $\omega = e^{2\pi i/d}$.  These operators satisfy $X^d=Z^d=I$.
(It follows that $ZX=\omega XZ$ and that the eigenstates of
the operators $Z, X, XZ, XZ^2\ldots XZ^{d-1}$ form a set of mutually unbiased bases \cite{BBRV02}.
We consider only this subset of the general operators $X^mZ^n$ as they are
sufficient for defining the graph-state stabilizers and for our secret sharing protocols.)
We further define the operator 
\begin{equation}
S\ket{j}\defeq\omega^{j(j-1)/2}\ket{j}.
\end{equation}
which satisfies $SZ=ZS$ and $SXS^{-1} = XZ$.

In a labelled graph state each vertex $\mathsf{v}_i$ is assigned a label $\ell_{i}\defeq (z_i, x_i, m_i)$, where $z_i, x_i, m_i \in \F_d$.
Denoting an operator $O$ acting on a vertex $\mathsf{v}_i$ as $O_i$, the graph state is labelled by applying the operators $S_i^{m_i}X_i^{x_i}Z_i^{z_i}$ to the state
$\ket{G}$.  We describe the combined labels for all vertices
by $\bm{\ell} \defeq (\bm{z}, \bm{x}, \bm{m})$, where $\bm{z} \defeq (z_1, z_2\dots , z_n)$,
and similarly for $\bm{x}$ and $\bm{m}$.
We use the notation
\begin{equation}
Z^{\bm{z}}\defeq Z_1^{z_1}\otimes Z_2^{z_2}\otimes\ldots \otimes Z_n^{z_n}
\end{equation}
and similarly for $X$ and $S$.  Thus the labelled graph state is
\begin{equation}
\ket{G_{\bm{\ell}}} \defeq S^{\bm{m}} X^{\bm{x}} Z^{\bm{z}} \ket{G}.\label{eq-labeldef}
\end{equation}
Note that our definition differs from that used previously for qubits \cite{MS08} in that we apply the $S$ operators after the $X$ operators
rather than before.

Finally, we define an {\it encoded graph state} as a labelled graph state with $\bm{x} = \bm{0}$.

\subsection{Stabilizers}
Higher-dimensional graph states can be represented within the stabilizer formalism \cite{HDER05, KKKS05, BB06}.
The encoded graph state with labels $\bm{\ell}$ satisfies, for each vertex $\mathsf{v}_i$ \cite{BB06},
\begin{equation}
K_i \ket{G_{\bm{\ell}}} = \omega^{-z_i} \ket{G_{\bm{\ell}}}\label{eq-stab}
\end{equation}
where
\begin{equation}
K_i\defeq (XZ^{m_i})_i Z^{\bm{A}_i},\label{eq-kadef}
\end{equation}
and we define, for a qudit operator $O$ and scalar multiple $k$,
\begin{equation}
O^{k\bm{A}_i}\defeq \prod_{\mathsf{v}_{j}\in \mathsf{V}}O_j^{kA_{ij}}.
\end{equation}

The $\{K_i\}$ is thus a set of stabilizers for the encoded graph state,
and $\ket{G_{\bm{\ell}}}$ is the unique state (up to a global phase) satisfying
(\ref{eq-stab}) for all vertices $v_i$.  Note that the $\{K_i\}$ are tensor products
of local operators; thus, even in the absence of any quantum communication,
the value of $K_i$ may be measured through local operations and classical communication (LOCC)
by the appropriate
subset of parties combining their local measurement results.  Note also
that we need only consider exponents of these local operators modulo $d$.

It follows from (\ref{eq-kadef}) and (\ref{eq-labeldef}) and the commutation relations of Sec.\ \ref{sec-ops}
that applying the stabilizer $K_i$ to a labelled graph state $\ket{G_{\bm{\ell}}}$ is
equivalent (up to a global phase) to implementing the label change $x_i \to x_i+1$
, $z_j \to z_j + A_{ij}$ (mod $d$) for all $\mathsf{e}_{ij}\in \mathsf{E}$; i.e.\
\begin{equation}
\ket{G_{\bm{\ell}}}=S^{\bm{m}} X^{\bm{x}} Z^{\bm{z}} \ket{G}\propto S^{\bm{m}} X^{\bm{x}}X_i Z^{\bm{z}}Z^{\bm{A}_i} \ket{G}
\end{equation}
Thus the two different labellings correspond to the
same physical state.
As in the qubit case \cite{MS08}, we can exploit this property
to determine the state's dependence on particular qudit labels.

\subsection{Local measurements}\label{sec-locmeas}
Our CQ and QQ protocols involve one of the parties as a designated ``dealer'',
whose local measurement on her own qudit produces a reduced graph state shared by the other parties
with some access structure and labelling determined by the dealer's measurement outcome.  We therefore
consider the effect of local measurements on qudit graph states.  This has previously been determined in the qubit case.

We will closely follow the derivation for the effect of measurements on qubit graphs
in \cite{HDER05}, the results of which were used in \cite{MS08}.  The results
below for the effect of local measurement on the adjacency matrix were previously derived \cite{BB06}
in the context of local complementation \cite{VdNDM04, HDER05}, although here we also explicitly derive the effect
of local measurement on the graph labels.
We find the following:

\begin{prop}
    \label{prop:measure2}
    Given an encoded graph state $\ket{G_{\bm{\ell}}}$, suppose a measurement of
    $(X^m Z)_i$
    yields the value $\omega^s$. Then the resultant labelled graph state
    is obtained by the following procedure:
    \begin{enumerate}
        \item[(M1)]
            For each pair $j, k$ of distinct neighbours of $i$, change $A_{jk}
            \mapsto A_{jk} + m A_{ij} A_{ik}$.
        \item[(M2)]
            Relabel each vertex $j$ by $z_j \mapsto z_j + A_{ij} s + m A_{ij}
            z_i + m A_{ij} (A_{ij} + 1) / 2$ and $m_j \mapsto m A_{ij}^2$.
        \item[(M3)]
            Remove vertex $i$ and corresponding edges.
    \end{enumerate}
\end{prop}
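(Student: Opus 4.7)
The plan is to use the stabilizer formalism: the post-measurement state is characterised by the subgroup of the original stabilizer group that commutes with $(X^mZ)_i$, together with the measurement eigenequation $(X^mZ)_i=\omega^s$. The new labels and adjacency of the reduced graph state will then be read off directly from the restrictions of the appropriately modified generators to the unmeasured qudits.

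For $j$ not adjacent to $i$ (i.e.\ $A_{ij}=0$), the generator $K_j$ has trivial support on qudit $i$ and already commutes with $(X^mZ)_i$; it contributes the parts of the graph disconnected from $i$ unchanged. For $j$ a neighbour of $i$, one has $K_j|_i=Z_i^{A_{ij}}$, which fails to commute with $(X^mZ)_i$. I resolve this by passing to $K_j'\defeq K_jK_i^{mA_{ij}}$, assuming $m_i=0$ at the measured vertex (as is the case in our protocols). A short calculation using $Z^aX^b=\omega^{ab}X^bZ^a$ together with $(X^mZ)^a=\omega^{ma(a-1)/2}X^{ma}Z^a$ collapses the $i$-component of $K_j'$ to $\omega^{mA_{ij}(A_{ij}+1)/2}(X^mZ)_i^{A_{ij}}$, which manifestly commutes with $(X^mZ)_i$. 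On the post-measurement state this acts as the scalar $\omega^{mA_{ij}(A_{ij}+1)/2+sA_{ij}}$, while the remainder
\begin{equation*}
K_j'|_{-i} = (XZ^{m_j+mA_{ij}^2})_j\prod_{l\neq i,j}Z_l^{A_{jl}+mA_{ij}A_{il}}
\end{equation*}
has precisely the form of a graph-state stabilizer generator on the reduced vertex set, immediately giving the adjacency update of M1 (the new $A_{jl}$ is symmetric in $j,l$ automatically) and the $m$-label update of M2. The new $z$-label is then forced by matching eigenvalues: the original stabilizer relations give $K_j'\ket{G_{\bm{\ell}}}=\omega^{-z_j-mA_{ij}z_i}\ket{G_{\bm{\ell}}}$, while the reduced form yields $K_j'|_{-i}\ket{\mathrm{post}}\cdot\omega^{mA_{ij}(A_{ij}+1)/2+sA_{ij}}=\omega^{-z_j'}\ket{\mathrm{post}}$. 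Equating the two forces $z_j'=z_j+sA_{ij}+mA_{ij}z_i+mA_{ij}(A_{ij}+1)/2$, exactly as in M2. Finally, $K_i$ is replaced by the measurement eigenequation, so vertex $i$ and its incident edges are removed (M3).

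The main obstacle is the bookkeeping of the many $\omega^{\cdots}$ phases that arise from the successive Pauli commutations and from converting $X^{ma}Z^a$ into $(X^mZ)^a$: these are precisely the source of the constants $sA_{ij}$, $mA_{ij}z_i$, and $mA_{ij}(A_{ij}+1)/2$ in the $z$-update, and making them assemble cleanly is the delicate part of the computation. The special case $m=0$ (measurement of $Z_i$) serves as a useful consistency check: no modification of $K_j$ is needed, the scalar reduces to $\omega^{sA_{ij}}$, and M2 degenerates to $z_j\mapsto z_j+sA_{ij}$ with $m_j$ unchanged, in agreement with the direct computational-basis projection.
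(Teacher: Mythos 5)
Your proof is correct and arrives at exactly the update rules M1--M3, but by a genuinely different route from the paper. The paper first treats the $m=0$ case by writing $\ket{G_{\bm{\ell}}}=\prod_j C_{ij}^{A_{ij}}\ket{\bar 0}_i\ket{G_{\bm{\ell}}\setminus\mathsf{v}_i}$ and projecting directly, then reduces $m\neq 0$ to that case by conjugating with the local Clifford $L_i^m=R_i^{-m}S^{-m\bm{A}_i^2}$, which maps $P_{X^mZ,s}$ to $P_{Z,s}$ and maps $\ket{G_{\bm{\ell}}}$ to the locally complemented graph state $\ket{\tau_i^m(G_{\bm{\ell}})}$; the leftover $S^{m\bm{A}_i^2}Z^{s\bm{A}_i}$ from the conjugated projector supplies the $m_j\mapsto mA_{ij}^2$ and $z_j\mapsto z_j+sA_{ij}$ pieces. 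You instead stay entirely inside the stabilizer measurement-update formalism: replace each non-commuting generator $K_j$ by $K_j'=K_jK_i^{mA_{ij}}$, which lies in the commutant of $(X^mZ)_i$, and read off $A'$, $m'$, $z'$ from its restriction to the unmeasured qudits. The algebraic core --- the identity $(X^mZ)^a=\omega^{ma(a-1)/2}X^{ma}Z^a$ and multiplication by $K_i^{\pm mA_{ij}}$ --- is the same in both arguments (the paper's Eq.\ (\ref{eq-mzero}) is essentially your $K_j'$ in conjugated form), but your version is more uniform (no special $m=0$ base case) and avoids the auxiliary operators $U$, $R$, $L_i$; the price is that you determine the reduced state only implicitly, so you should say explicitly that the $\omega^s$-eigenspace of $X^mZ$ is one-dimensional (hence the post-measurement state factorizes) and that the $n-1$ reduced operators $K_j'|_{-i}$ are independent and of the standard form, so that the uniqueness statement following Eq.\ (\ref{eq-stab}) pins down the reduced state. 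Your assumption $m_i=0$ matches the paper's own implicit one (it writes $K_j=X_jZ^{\bm{A}_j}$), and your $m_j\mapsto m_j+mA_{ij}^2$ is the natural generalization of the stated rule. One transcription slip: in your displayed eigenvalue-matching equation the phase $\omega^{mA_{ij}(A_{ij}+1)/2+sA_{ij}}$ sits on the wrong side (as written the phases would cancel out of $z_j'$), though the formula for $z_j'$ you actually extract is the correct one.
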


\begin{proof}
We first introduce the operators $U$ and $R$, defined as
\begin{align}
U \ket{k} &\defeq \sum_{j \in \F_d} \omega^{jk} \ket{j}\\
R&\defeq  U^{-1}S^{-1}U.
\end{align}
It follows that $UXU^{-1}= Z$, and $UZU^{-1} = X^{-1}$.
We also have $[R,X]=[S,Z]=0$, and $RZR^{-1}=SXS^{-1}=XZ$; i.e.
the $R$ operator performs a $Z\leftrightarrow XZ$ basis transformation
just as $S$ performs $X\leftrightarrow XZ$.

For each local Pauli operator $O = X^n Z^m$, we define the operator
$P_{O,j}$, which is the projection onto the $\omega^j$-eigenspace of $O$:
\begin{equation}
P_{O,j} \defeq \frac{1}{d} \sum_{k \in F_d} \omega^{-jk} O^k.
\end{equation}

We first consider the case $m = 0$ where the projective measurement is of $Z_i$. We have
\begin{equation}
\ket{G_{\bm{\ell}}} = \prod_{\mathsf{v}_j\in \mathsf{V}} C_{ij}^{A_{ij}} \ket{\bar 0}_i
    \ket{G_{\bm{\ell}} \setminus \mathsf{v}_i}_{{\mathsf{V} \setminus\mathsf{v}_i}}.
\end{equation}
where $\ket{G_\ell\setminus\mathsf{v}_i}$ denotes the graph state with vertex $\mathsf{v}_i$ and all its edges removed, and  $\mathsf{V} \setminus\mathsf{v}_i$ denotes the
set all of players excluding player $i$.
    Next, noting that
\begin{equation}
\prod_{\mathsf{v}_{j}\in \mathsf{V}}C_{ij}^{A_{ij}} = \prod_{\mathsf{v}_{j}\in \mathsf{V}} \sum_{k \in \F_d}
    (P_{Z,k})_i Z_j^{kA_{ij}} = \sum_{k \in \F_d} (P_{Z,k})_i Z^{k \bm{A}_i},
\end{equation}
    we have
    \begin{align}
        (P_{Z,s})_i \ket{G_{\bm{\ell}}}
        &=
        (P_{Z,s})_i\prod_{\mathsf{v}_{j}\in \mathsf{V}} C_{ij}^{A_{ij}} \ket{\bar 0}_i\ket{G_{\bm{\ell}}\setminus \mathsf{v}_i}_{\mathsf{V} \setminus \mathsf{v}_i}\nonumber\\
        &=
        (P_{Z,s})_i \Bigl( \sum_{k \in \F_d} (P_{Z,k})_i Z^{k \bm{A}_i} \Bigr) \ket{\bar 0}_i \ket{G_{\bm{\ell}} \setminus \mathsf{v}_i}_{\mathsf{V} \setminus \mathsf{v}_i}\nonumber\\
        &=
        \frac{1}{\sqrt{d}} \ket{s}_i Z^{s\bm{A}_i} \ket{G_{\bm{\ell}} \setminus \mathsf{v}_i}_{\mathsf{V}\setminus \mathsf{v}_i}.
    \end{align}

Now we consider the case $m \ne 0$. Let
\begin{equation}
L_i \defeq R^{-1}_i S^{-\bm{A}_i^2}.
\end{equation}
where we define
\begin{equation}
O^{k\bm{A}_i^2}\defeq \prod_{\mathsf{v}_{j}\in \mathsf{V}}O_j^{kA_{ij}^2}.
\end{equation}

    Then, for each $\mathsf{v}_j \in \mathsf{V}$,
    \begin{align}
        L_i^m K_j L_i^{-m}
        &= R_i^{-m} S^{-m \bm{A}_i^2} X_j Z^{\bm{A_j}} S^{m \bm{A}_i^2} R_i^m\nonumber\\
        &= \omega^{mA_{ij}(A_{ij}+1)/2} K_i^{-mA_{ij}} K_j'\label{eq-mzero},
    \end{align}
    where $K_j' = X_jZ^{\bm{A_j} +mA_{ij}(\bm{A}_i)}Z_j^{-mA_{ij}^2}$.  Equation (\ref{eq-mzero}) follows from the relation
    \begin{equation}
(X^{-1}Z)^{A_{ij}} = \omega^{A_{ij}(A_{ij}-1)/2} X^{-A_{ij}} Z^{A_{ij}}.
\end{equation}
    Thus, noting that $K_i$ commutes with $L_i$, we have
\begin{equation}
K_j' = \omega^{-mA_{ij}(A_{ij}+1)/2} L_i^m K_i^{mA_{ij}} K_j L_i^{-m},
\end{equation}
    so
    \begin{align}
        K_j' L_i^m \ket{G_{\bm{\ell}}}
        &= \omega^{-mA_{ij}(A_{ij}+1)/2} L_i^m K_i^{mA_{ij}} K_j \ket{G_{\bm{\ell}}} \nonumber\\
        &= \omega^{-mA_{ij}(A_{ij}+1)/2 - mA_{ij}z_i - z_j} L_i^m \ket{G_{\bm{\ell}}}.
    \end{align}
    The state $L_i^m \ket{G_{\bm{\ell}}}$, which we will denote $\ket{\tau_i^m(G_{\bm{\ell}})}$,
    is therefore an encoded graph state
    $\ket{\tau_i^m(G_{\bm{\ell}})}$, stabilized by the $K_j'$ operators.  The structure (i.e.\ the adjacency matrix)
    of the graph $\tau_a^m(G_{\bm{\ell}})$ is given by equating the $\bm{A_j'}$ for each vertex
    to the $\bm{ A_i}$ given in the standard form for graph-state stabilizers in (\ref{eq-kadef}), and we similarly
    find the labels
\begin{equation}
z_j' = z_j +m A_{ij} z_i + m A_{ij}(A_{ij}+1)/2.
\end{equation}

    We also have
    \begin{align}
        L_i^{-m} (P_{Z,s})_i L_i^m
        &= \frac{1}{d} \sum_{k \in \F_d} \omega^{-sk} R_i^m S^{m
        \bm{A}_i^2} Z_i^k S^{-m \bm{A}_i^2} R_i^{-m} \nonumber\\
        &= \frac{1}{d} \sum_{k \in \F_d} \omega^{-sk} (X^m Z)_i^k \nonumber\\
        &= (P_{X^m Z,s})_i.
    \end{align}
    Thus we compute
    \begin{align}
        &(P_{X^m Z,s})_i \ket{G_{\bm{\ell}}}\nonumber\\
        &= L_i^{-m} (P_{Z,s})_i L_i^m \ket{G_{\bm{\ell}}} \nonumber\\
        &= \frac{1}{\sqrt{d}} (R^m \ket{s})_i
        (S^{m \bm{A}_i^2} Z^{s \bm{A}_i} \ket{\tau_i^m(G_{\bm{\ell}}) \setminus
        \mathsf{v}_i})_{\mathsf{V} \setminus \mathsf{v}_i}.
    \end{align}
Hence a local measurement of $XZ^m$ on the vertex $\mathsf{v}_i$ gives the graph $\ket{\tau_i^m(G_{\bm{\ell}})}$
with the $\mathsf{v}_i$ vertex removed and additional labelling operators $S^{m \bm{A}_i^2} Z^{s \bm{A}_i}$ applied. 
\end{proof}

The reduced-state graph transformation given in M1 can be regarded as an analogue of local complementation for qubit graphs; note
however that we have not shown (as is true for the qubit case) that all graphs equivalent under local Clifford group operations
are similarly equivalent under such transformations.  This has been proven elsewhere \cite{BB06}.

An example of the effect of local measurement is given in Fig.\ \ref{fig-square}
for a 4-party square state with $d=5$, all edges of weight 2 and each vertex having the
labels $(z,x,m)=(1,0,1)$.  If player $1$ performs a measurement
of $XZ^2$ and gets a result of $\omega^2$, the state is transformed as shown.

\begin{figure}
\includegraphics[height=4cm]{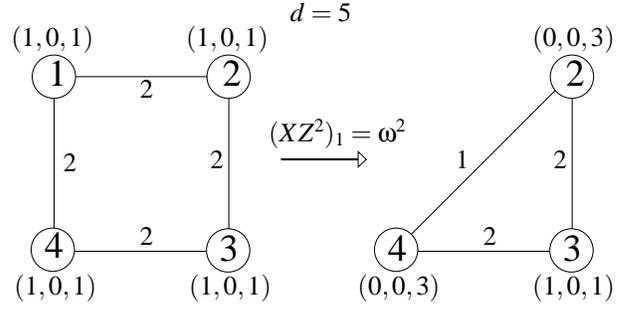}
\caption{Effect of a local $XZ^2$ measurement (with result $\omega^2$) on vertex 1 of an encoded-square qudit graph with weight-2 edges in $d=5$.
Labels are denoted $(z,x,m)$.  Following the measurement, vertex 1 and all its edges are removed.  Vertices 2 and 3 gain an edge of weight $N_{23}=0+6=1$ (modulo $d=5$)
and vertex labels $z=1+4+4+6=0$, $m=8=3$.}\label{fig-square}
\end{figure}

\subsection{Dependence and access}
We now consider conditions such that a subset of players can independently recover a secret encoded into the labels of a labelled qudit graph state,
using the terminology previously employed for the qubit case \cite{MS08}.
For a state $\projop{G_{\bm{\ell}}}$, consider a subset of players $\mathsf{V'} \subseteq \mathsf{V}$.  Without any assistance
from any other players the subset $\mathsf{V'}$ only has access to the reduced state $\rho_{\mathsf{V'}} = \Tr_{\mathsf{V}\setminus \mathsf{V'}} \projop{G_{\bm{\ell}}}$ and must recover the secret from this state.

Considering then some secret $s\in \F_d$, which is a function of the labels $\ell$, it follows that the set of players $\mathsf{V'}$ are clearly unable
to recover $s$ if the state $\rho_{\mathsf{V'}}$ is invariant under changes in $k$.  If this is not the case, we say the reduced state is {\it dependent}
on the value $s$.

However, dependency is generally insufficient for the players to be able to recover $s$.
We say that a set of players $\mathsf{V'}$  can {\it access} $s$ if there is a measurement
protocol the players in $\mathsf{V'}$ can perform on $\rho_{\mathsf{V'}}$ such that the value of $s$ is revealed
to them with certainty.

We find the following:

\begin{prop}
    \label{prop:access}
    Given an encoded graph state, if for a set of vertices $\mathsf{V'}\subseteq\mathsf{V}$,
    we have some corresponding set of values $\{w_i\}\in \F_d$ (with $w_i=0$ for all $\mathsf{v}_i\notin\mathsf{V'}$)
    then $\sum_i{w_iz_i}$ is
    accessible to the parties in $\mathsf{V'}$ via LOCC if $\sum_{\mathsf{v}_{j}\in\mathsf{V}}w_iA_{ij}=0$ for all
    $\mathsf{v}_i \in \mathsf{V} \setminus \mathsf{V'}$.
\end{prop}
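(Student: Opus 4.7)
The plan is to exploit the stabilizer relation $K_i\ket{G_{\bm{\ell}}} = \omega^{-z_i}\ket{G_{\bm{\ell}}}$ from (\ref{eq-stab}) by forming a product of stabilizers weighted by the $w_i$. Specifically, I would consider the operator
\begin{equation}
M \defeq \prod_{\mathsf{v}_i\in\mathsf{V}} K_i^{w_i},
\end{equation}
which, since the $K_i$ mutually commute and each $K_i^{w_i}\ket{G_{\bm{\ell}}} = \omega^{-w_i z_i}\ket{G_{\bm{\ell}}}$, satisfies $M\ket{G_{\bm{\ell}}} = \omega^{-\sum_i w_i z_i}\ket{G_{\bm{\ell}}}$. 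Thus measuring $M$ on the graph state deterministically yields the desired quantity $\sum_i w_i z_i$, provided $M$ is of a form that the players in $\mathsf{V'}$ can measure by LOCC.

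Next I would expand $M$ vertex-by-vertex using (\ref{eq-kadef}). Each factor $K_i^{w_i}$ contributes $(XZ^{m_i})_i^{w_i}$ on vertex $\mathsf{v}_i$ and $Z_j^{w_i A_{ij}}$ on every other vertex $\mathsf{v}_j$. Combining these contributions, the operator $M$ factorises, up to a global phase arising from the $ZX=\omega XZ$ relation, as a tensor product whose action on each vertex $\mathsf{v}_j$ is $X_j^{w_j} Z_j^{w_j m_j + \sum_{i} w_i A_{ij}}$. The phase issue is only a bookkeeping nuisance: it is a fixed scalar that can be absorbed when the players interpret their combined classical outcome.

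Now I would use the hypothesis. For every $\mathsf{v}_j \in \mathsf{V}\setminus\mathsf{V'}$ we have $w_j=0$, so the $X$-factor on such a vertex is trivial; moreover the condition $\sum_i w_i A_{ij}=0$ (reading the index in the statement in the natural way consistent with $w_i=0$ outside $\mathsf{V'}$) makes the $Z$-exponent vanish as well. Hence $M$ acts as the identity on every vertex outside $\mathsf{V'}$, and is a tensor product of local generalised Pauli operators on the vertices of $\mathsf{V'}$.

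Since $M$ has been shown to be a tensor product of single-qudit Pauli operators supported entirely within $\mathsf{V'}$, each player $\mathsf{v}_i\in\mathsf{V'}$ can locally measure their factor $X_i^{w_i}Z_i^{w_i m_i + \sum_k w_k A_{ki}}$ (which is well defined from their knowledge of $\bm{\ell}$ and $A$), broadcast the outcome over the private classical channels, and reconstruct the product of eigenvalues, which equals $\omega^{-\sum_i w_i z_i}$ up to the known phase correction. This yields $\sum_i w_i z_i \bmod d$ with certainty, establishing accessibility via LOCC. The main subtlety to handle cleanly is the tracking of the non-commutation phase when multiplying the $(XZ^{m_i})_i^{w_i}$ factors together, but since this phase is deterministic and independent of the unknown labels $z_i$, it poses no obstruction to the protocol.
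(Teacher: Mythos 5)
Your proof is correct and takes essentially the same route as the paper's: form the product stabilizer $K=\prod_i K_i^{w_i}$, observe that the encoded graph state is an eigenstate with eigenvalue $\omega^{-\sum_i w_i z_i}$, and have the parties measure it locally and combine outcomes. You in fact make explicit a step the paper leaves implicit --- that the hypothesis $\sum_i w_i A_{ij}=0$ for $\mathsf{v}_j\notin\mathsf{V'}$ (together with $w_j=0$ there) is exactly what forces the product to act as the identity outside $\mathsf{V'}$, so that it is a tensor product of local generalised Paulis supported on $\mathsf{V'}$ and hence measurable by LOCC up to a known, label-independent phase.
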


\begin{proof}
    Each party $\mathsf{v}_i\in \mathsf{V'}$ measures  $K_i^{w_i}$ for an overall stabilizer
    $K = \prod_{\mathsf{v}_i \in \mathsf{V'}} K_i^{w_i}$. The graph state is thus
    an eigenstate of $K$ with eigenvalue $\omega^{-\sum_i w_iz_i}$; thus the parties
    in $\mathsf{V'}$  can jointly retrieve $\sum_i w_iz_i$.
\end{proof}

\begin{corr}
    \label{corr:access}
    In an encoded graph state, the label $z_k$ is accessible via LOCC by $\mathsf{V'}$
    if $\mathsf{V'}$ contains $\mathsf{v_k}$ and all of its neighbours.
\end{corr}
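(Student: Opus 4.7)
The plan is to deduce this corollary as an immediate specialization of Proposition~\ref{prop:access} by making the obvious choice of weights that isolates $z_k$. Specifically, I would set $w_i = \delta_{ik}$, i.e.\ $w_k = 1$ and $w_i = 0$ for $i \neq k$. With this choice, the linear combination $\sum_i w_i z_i$ reduces to just $z_k$, so accessibility of this quantity (granted by Proposition~\ref{prop:access}) would give accessibility of $z_k$ itself.

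Next I would verify the hypothesis of Proposition~\ref{prop:access} under the assumption on $\mathsf{V'}$. The condition to check is $\sum_{\mathsf{v}_j \in \mathsf{V}} w_j A_{ij} = 0$ for every $\mathsf{v}_i \in \mathsf{V}\setminus\mathsf{V'}$. With $w_j = \delta_{jk}$ this collapses to $A_{ik} = 0$ for every $\mathsf{v}_i \notin \mathsf{V'}$, which is exactly the statement that no neighbour of $\mathsf{v}_k$ lies outside $\mathsf{V'}$. By hypothesis $\mathsf{V'}$ contains all neighbours of $\mathsf{v}_k$, so the condition is met. I would also remark that the choice $w_k = 1$ requires $\mathsf{v}_k \in \mathsf{V'}$ itself, which is the other part of the corollary's hypothesis and which is needed because the stabilizer $K_k$ acts nontrivially on $\mathsf{v}_k$ and must therefore be measurable by the coalition.

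Concretely, the measurement protocol inherited from Proposition~\ref{prop:access} has $\mathsf{v}_k$ measure $K_k$ and every other party in $\mathsf{V'}$ do nothing (since $w_i = 0$ for $i\neq k$). The local factors of $K_k = (XZ^{m_k})_k Z^{\bm{A}_k}$ acting on the neighbours of $\mathsf{v}_k$ are measured by those neighbours (who by hypothesis are all in $\mathsf{V'}$), and the classical measurement outcomes are combined via the private classical channels to yield the eigenvalue $\omega^{-z_k}$, from which $z_k$ is recovered.

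There is essentially no obstacle here: the corollary is a direct instantiation of the previous proposition, and the only real content is the trivial algebraic observation that the orthogonality condition on $\bm{w}$ becomes the neighbourhood-containment condition when $\bm{w}$ is a standard basis vector. The only minor care needed is to be explicit that $\mathsf{v}_k \in \mathsf{V'}$ is both used (so that $K_k$ is a valid stabilizer for the coalition to measure) and is part of the stated hypothesis.
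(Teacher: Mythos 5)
Your proposal is correct and matches the paper's own proof: both instantiate Proposition~\ref{prop:access} with $w_k=1$ and $w_{i\neq k}=0$, reduce the orthogonality condition to $A_{ik}=0$ for vertices outside $\mathsf{V'}$ (guaranteed by neighbourhood containment), and conclude that measuring the single stabilizer $K_k$ yields $z_k$. Your added remarks on how the local tensor factors of $K_k$ are distributed among the coalition are a harmless elaboration of the same argument.
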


\begin{proof}
    Apply Proposition \ref{prop:access} with $w_k = 1$ and all $w_{i\ne k}=0$, so that $\sum_i w_iz_i = z_k$.
    For any $\mathsf{v}_j \in \mathsf{V} \setminus \mathsf{V'}$, we have that $\sum_{\mathsf{e}_{ij}\in\mathsf{e}}w_iA_{ij}=A_{kj}=0$
    since $\mathsf{v}_j$ is outside $\mathsf{V'}$ and thus not a
    neighbour of $\mathsf{v}_i$. The stabilizer $K$ in the proof of Proposition
    \ref{prop:access} is just $K_i$ in this case; hence measuring $K_i$ gives the value
    $z_i$.
\end{proof}

Note that, because labels are applied to a given vertex via {\it local} Pauli operations on that vertex, the reduced
graph state obtainable by tracing out a given vertex (i.e.\ a given player) must be independent
of that vertex's labels.  Hence those labels are inaccessible to the remaining players.  We exploit this
property to determine the minimum number of players required to recover a secret in threshold secret sharing protocols.

\subsubsection{Label ``shuffling''}
As for the qubit case, we demonstrate that a given subset of parties
is independent of a particular qudit label by ``shuffling'' the label onto
qudits not in the subset.  That is, we demonstrate that the transfer of the label
onto a different qudit (such that the original qudit no longer has any dependence on the label)
results in a physically equivalent state.  Thus the label shuffling procedure
does not represent any physical operation or change in the state: it merely generates
an equally valid relabelling of the same physical state.

\begin{prop}
    \label{prop:depend}
    Let $\ket{G_{\bm{\ell}}}$ be an encoded graph  state, and let players $i$ and $j$ be
    neighbours. Then $\ket{G_{\bm{\ell}}}$ is equivalent to the labelled graph
    state $\ket{G_{\bm{\ell}'}}$ where vertex $i$ is relabelled $z_i' = 0$,
    vertex $j$ is relabelled $(z_j', x_j') = (z_j, -A_{ij}^{-1} z_i)$, and each
    neighbour $k$ of $j$ is relabelled $z_k' = z_k - A_{ij}^{-1} A_{jk} z_i)$.
\end{prop}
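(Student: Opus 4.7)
The plan is to apply a carefully chosen power of the vertex-$j$ stabilizer, $K_j^{\alpha}$, to relabel the state without altering it physically. Because $i$ and $j$ are neighbours, $A_{ij}\in\F_d$ is nonzero and hence invertible, so I may set $\alpha\defeq -A_{ij}^{-1}z_i$. The stabilizer identity (\ref{eq-stab}) then gives $K_j^{\alpha}\ket{G_{\bm{\ell}}}=\omega^{-\alpha z_j}\ket{G_{\bm{\ell}}}$, so $K_j^{\alpha}$ acts only as a global phase, and whatever relabelling it effects on the stabilizer-state expression must describe the same physical state.

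To read off that relabelling, I would invoke the rule stated just after (\ref{eq-kadef}): acting with $K_j$ on an encoded graph state is equivalent to $x_j\to x_j+1$ and $z_k\to z_k+A_{jk}$ for every $k$. Iterating this $\alpha$ times — or, equivalently, commuting $K_j^{\alpha}=(X_jZ_j^{m_j})^{\alpha}Z^{\alpha\bm{A}_j}$ past $S^{\bm{m}}Z^{\bm{z}}\ket{G}$ using $[S,Z]=0$, $S^{-m}XS^m=XZ^{-m}$, and $(XZ^m)^{\alpha}=\omega^{m\alpha(\alpha-1)/2}X^{\alpha}Z^{m\alpha}$ — yields the net label change $x_j\to\alpha$ and $z_k\to z_k+\alpha A_{jk}$ for all $k$, with every other label unchanged. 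Substituting $\alpha=-A_{ij}^{-1}z_i$ then reproduces the proposition's relabelling: by $A_{ji}=A_{ij}$ one gets $z_i'=z_i+\alpha A_{ij}=0$, by $A_{jj}=0$ the value of $z_j$ is untouched, and the $m$-labels are preserved because $K_j$ contains no $S$ factor.

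The main subtlety I expect is the phase bookkeeping on vertex $j$: the $\omega^{m_j\alpha(\alpha-1)/2}$ produced by expanding $(X_jZ_j^{m_j})^{\alpha}$ must cancel exactly against the phase produced when commuting $X_j^{\alpha}$ past the resident $S_j^{m_j}Z_j^{z_j}$, otherwise $z_j$ would pick up a spurious $m_j$-dependent shift forbidden by the statement. Once that cancellation is exhibited — it is the same mechanism already needed to justify the label-shift rule following (\ref{eq-kadef}) — the proposition follows with no further structure of the graph beyond $A_{ij}\ne 0$.
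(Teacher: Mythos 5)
Your proposal is correct and follows essentially the same route as the paper: the paper's proof is precisely to note that $K_j$ stabilizes $\ket{G_{\bm{\ell}}}$, apply $K_j^{-A_{ij}^{-1}z_i}$ (a pure phase on the state), and read off the relabelling from the label-change rule following (\ref{eq-kadef}). Your additional care with the invertibility of $A_{ij}$ and the $m_j$-dependent phase bookkeeping only fleshes out details the paper leaves implicit.
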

\begin{proof}
Noting that $K_j$ stabilizes $\ket{G_{\bm{\ell}}}$, it follows that
    $K_j^{-A_{ij}^{-1} z_i} \ket{G_{\bm{\ell}}} \propto \ket{G_{\bm{\ell}'}}$, which, from (\ref{eq-kadef}), corresponds to the above relabelling.
\end{proof}

We demonstrate the above concept in Fig.\ \ref{fig-squareshuffle}
using a 4-qudit square graph with weight-2 edges in $d=5$, with a label of
$z=3$ on vertex 1 and all other labels equal to zero, and shuffling player 1's label to player 2.
\begin{figure}
\includegraphics[height=4cm]{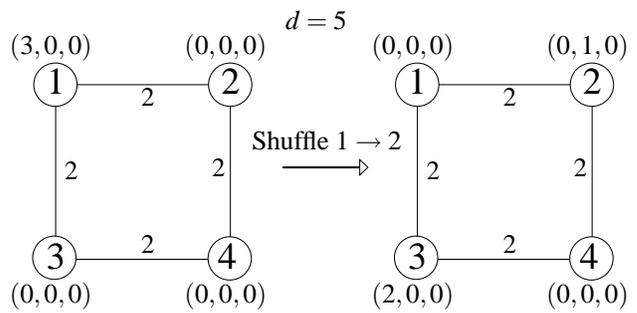}
\caption{Effect of label shuffling from vertex $1$ to $2$ on an encoded square qudit graph with weight-2 edges in $d=5$.  Labels are denoted $(z,x,m)$.
Vertex 1's $z$ label is set to 0.  Vertex 2's $x$ label is set to $-3/2=1$ (mod $d=5$), and vertex 3's $z$ label is set to $-6/2=2$.}\label{fig-squareshuffle}
\end{figure}

\section{Secret sharing protocols}\label{sec-protocols}
\subsection{CC scheme}
In a $(k,n)$ CC threshold scheme, the players' shared quantum state
is privately transferred to them by the dealer, after which
a classical secret can be
reconstructed by any $k$ of $n$ players (but no fewer) using classical
channels between the players.  We demonstrate the feasibility of such
schemes using labelled graph states by encoding a secret ``dit'' $s$ on to
the $z$ labels of certain vertices of the graph.  We then show that a subset
of $m$ parties may measure $s$ by jointly measuring one of the graph state's stabilizers $K_i$.
Finally, to demonstrate that fewer than $m$ parties cannot obtain $s$, it is sufficient to show
that we can shuffle the vertices' $z$ values such that any given set of fewer than $k$ parties has no dependence
on $s$ (for which it is clearly sufficient to demonstrate the above for $k-1$ parties only).

Note that secret sharing is known to be information-theoretically secure in the classical case; in our analysis
we are showing that such protocols may be unified within the graph state formalism.  CC protocols within our formalism
also provide a useful basis for the CQ and QQ cases, as shown later.

We find that such schemes may be
implemented for $d$-dimensional analogues of all the qubit states for which it was previously demonstrated \cite{MS08},
using essentially the same reasoning.

\subsubsection{(n,n) CC protocol in the tree state}\label{sec-CCnn}
We can implement an $(n,n)$ protocol using a tree state, consisting of a single vertex
connected by weight-1 edges to $n-1$ additional vertices, as illustrated in Fig.\ \ref{fig-nghzmcc}.
(In the 3-party qubit case with all labels set to zero this is the well-known Greenberger-Horne-Zeilinger-Mermin state).
We set $z_1=s$ and $z_i=0$ for all $i\ne 1$.
\begin{figure}
\includegraphics[height=4cm]{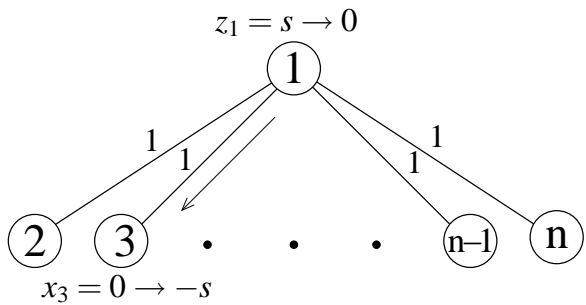}
\caption{Shuffling of a classical secret $s$ from player 1 to player 3 in a CC protocol for an encoded tree state.  As this label can be shuffled to any player
(or remain with player 1), any subset of the parties has no dependence on $s$.}\label{fig-nghzmcc}
\end{figure}
For this state, the stabilizer $K_1=(XZ^{m_1})_1\otimes Z_2\otimes\cdots \otimes Z_n$ has the eigenvalue
$\omega^{-s}$, and the value of $s$ may thus be accessed through local measurements by all $n$ parties.

We see that $s$ is inaccessible to any smaller number of parties as follows: The set of $n-1$ parties
excluding player 1 has no dependence on $s$.  However, we can also shuffle the $s$
dependence from player 1 to any other party $i$ as described in Proposition  \ref{prop:depend}.
This leaves us with $z_1=0$, $x_i=-z_1=-s$, and all other labels unchanged, in which case the labels
for all players other than $i$ have no dependence on $s$.  Hence all $n$ players are necessary to recover $s$.

\subsubsection{(2,3) CC protocol} \label{sec-CC32}

We can implement a $(2,3)$ protocol using the labelled graph state in Fig.\ \ref{fig-32cc}. We encode the classical secret $s\in \mathbb{F}_d$ by setting $z_1=0$,
 $z_2=2s$ and $z_3=s$ and set all other labels to zero as illustrated.
\begin{figure}
\includegraphics[height=3cm]{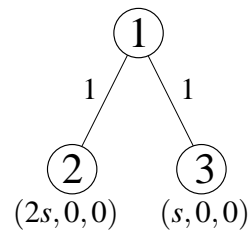}
\caption{ Encoding of a classical secret $s\in\mathbb{F}_d$ onto a three-party state to enable a $(2,3)$ CC protocol. It can be seen that the
secret can be shuffled away from any single qubit. At the same time any two players can access the secret as explained in the text.} \label{fig-32cc}
\end{figure}

To see that no single player can access the secret we apply the shuffling as described in Proposition 3. Player 1 cannot access the secret, as he has no label-dependence
on the secret. The label $z_2=2s$ can be removed by shuffling it to player 1 and player 3 as per Proposition 3 (explicitly by multiplication of
 stabilizer $K_1^{-2s}$); hence player 2 cannot access the secret. The label $z_3=s$ can be removed by shuffling it to player 1 and player 3 as per
 Proposition 3 (explicitly by multiplication of stabilizer $K_1^{-s}$), hence player 3 also cannot access the secret.

To see that any pair can access the secret, we take each pair separately.
Players 1 and 2 can measure $K_2 = {Z_1}X_2$, giving outcome $\omega^{-2s}$. Players 1 and 3 can measure $K_3 = {Z_1}X_3$, giving
outcome $\omega^{-s}$. Players 2 and 3 can measure $K_2^{d-1} K_3 = X_2^{d-1} X_3$, giving outcome $\omega^{s}$.

\subsubsection{(3,4) CC protocol in the ring state}
We can implement a $(3,4)$ protocol using a square encoded graph state with weight-1 edges and $m_i=0$ for all $i$,
by setting $z_i=s$ for all $i$.

Thus we have $K_i=\omega^{-s}$ for all $i$ and e.g. $K_2=Z_1X_2Z_3$, so players 1, 2 and 3 can collaborate
to measure $s$.  By the symmetry of the state, this is similarly the case for any set of three players.

As depicted in Fig.\ \ref{fig-43cc}, we can shuffle the $s$ dependence of player 1 to player 2.
This sets $z_1=0$, $x_2=-s$ and $z_3=s-s=0$.  This leaves the labels of 1 and 3
independent of $s$; hence neither they, nor (by symmetry) any two nonadjacent parties, can obtain $s$.
\begin{figure}
\includegraphics[height=4cm]{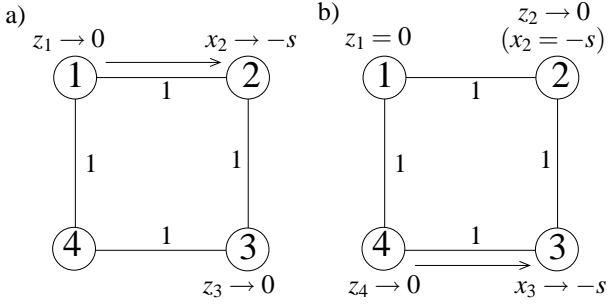}
\caption{CC protocol in an encoded $(3,4)$ ring state.  All players start with $z=s$. Dependence on the label $s$ can be shuffled such that any two a) nonadjacent
or b) adjacent parties are independent of the label.  Note that the the shuffling in b) occurs after first performing that in a)}\label{fig-43cc}
\end{figure}

If they subsequently perform a $4\to 3$ shuffle, this leaves player 1's labels unaltered and sets
player 4's $z$ label to 0; hence 1 and 4 (and thus any two adjacent players) cannot obtain $Z$, thereby completing
the proof.

\subsubsection{(3,5) CC protocol in the ring state}
We can implement a $(3,5)$ protocol using a 5-vertex ring-shaped encoded graph state with weight-1 edges, as shown in Figs.\ \ref{fig-53cca} and \ref{fig-53ccb},
setting $m_i=0$ for all $i$ and $z_i=s$ for all $i$, so again a $K_i$ measurement yields $\omega^{-s}$ for all $i$ and, for example, $K_2=Z_1X_2Z_3$ so players
1, 2 and 3 (hence any three adjacent parties) can measure $s$.

The other possibility is where only two of three players are adjacent, e.g. 1, 2 and 4.  For the ring state we
have
\begin{equation}
K_i=X_iZ_{i+1}Z_{i-1}
\end{equation}
so
\begin{equation}
K_1K_2K_4^{-1}= X_1 Z_2 X_2 Z_4 X^{-1}_4,
\end{equation}
which can be measured by players 1, 2 and 4, giving $\omega^{-s}$.  Hence any 3 parties can recover $s$.

\begin{figure}
\includegraphics[height=4cm]{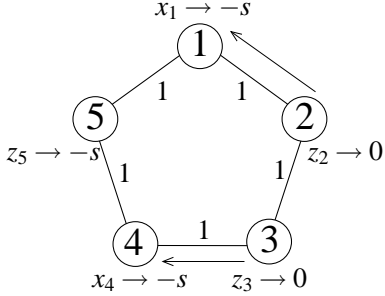}
\caption{CC protocol for an encoded $(3,5)$ ring state.  All players start with $z=s$. Dependence on the label $s$ can be shuffled such that any two adjacent
parties are independent of $s$ (see Fig.\ \ref{fig-53ccb} for nonadjacent parties.}\label{fig-53cca}
\end{figure}

\begin{figure}
\includegraphics[height=4cm]{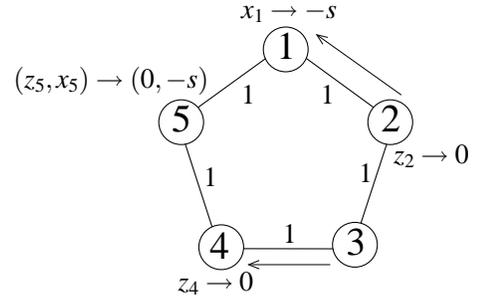}
\caption{Shuffling of the label $s$ such that any two nonadjacent parties
are independent of $s$, for the $(3,5)$ ring state.}\label{fig-53ccb}
\end{figure}
In the 5-qudit ring state we can always renumber the parties such that for
two parties $i$ and $j$, $i+1\le j \le i+2$.  In this case $i$ and $j$
can shuffle their $s$ dependence to the parties $i-1$ and $j+1$; hence two
players are not sufficient to recover $s$.

\subsection{The CQ scheme}
As in \cite{MS08}, we implement CQ schemes by performing a quantum key distribution (QKD) protocol,
wherein the dealer and a subset of the players end up securely sharing a {\it random} secret dit $s$.
This shared random classical information can then be used by these parties to securely share any {\it specified} secret of the same
length via standard classical cryptography protocols.

We implement such QKD protocols in graph states by having the dealer distribute the players' qudits to them
over public channels, after which the dealer and the players measure
their own qudits using bases chosen at random.  After measurement the dealer and players
publically announce their bases and only retain their measurement results for certain combinations
of chosen bases, such that $m$ of the players may collaborate to obtain a classical secret shared with the dealer.

The CQ protocols we have found follow a pattern whereby the result of the dealer's measurement
is a projection of the resultant reduced graph into a labelled state whose labels are correlated
with the dealer's measurement, and for which the players can perform a CC protocol.

As in \cite{MS08}, our protocol is based on that of Hillery, Bu{\v z}ek and Berthiaume \cite{HBB99} and
has the same limitation of being secure against certain classes of attack such as
intercept-resend attacks by an eavesdropper or certain classes of attack by dishonest participants \cite{HBB99},
but not against all attacks by dishonest participants \cite{KKI99}.  Additionally, we do not consider the case of
noisy channels, although existing protocols which do \cite{CL07} could potentially be adapted to our model.

\subsubsection{(n, n) CQ protocol in the extended tree state}
We can construct an $(n,n)$ CQ protocol using an ``extended'' tree state, consisting
of an $n$-party tree state (as in the CC protocol) with the dealer's qudit connected to player 1's qudit
by a weight-1 edge and all labels set to 0, as illustrated in Fig.\ \ref{fig-nghzmcq}.  This graph state therefore has stabilizers
\begin{align}
K_D&=X_D Z_1\\
K_1&=Z_D X_1\prod_{i\ne 1}Z_i\\
K_i&=X_iZ_1\qquad (i\ne 1).
\end{align}
\begin{figure}
\includegraphics[height=4cm]{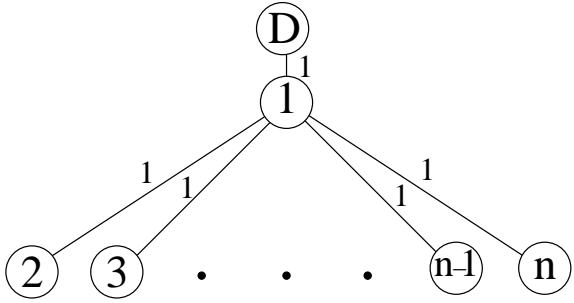}
\caption{The ``extended'' tree state used for the CQ protocol.  Local measurement on the dealer's qudit produces
a reduced state of the form of that in Fig.\ \ref{fig-nghzmcc} (although with an additional $m=t_D$ label on vertex 1.)}\label{fig-nghzmcq}
\end{figure}

In our protocol, the dealer and the players $1,\ldots ,n$ each independently randomly choose
a value $t_D, t_1,\ldots t_n \in \F_d$.  The dealer then measures her qudit in the basis
$X^{t_D}Z$, player 1 measures $XZ^{t_1}$ whereas the remaining players $i$ each measure $X^{t_i}Z$.

If the dealer's measurement result is $\omega^s$ then, from Proposition \ref{prop:measure2},
the reduced state of the remaining $n$ parties is the $n$GHZM state (without the additional dealer's vertex), still
with weight-1 edges.  The only relabelled vertex is that of player 1, with $z_1=s$, $m_1=t_D$.  All other labels remain zero.
The reduced graph therefore has stabilizers
\begin{align}
k_1&=(XZ^{t_D})_1 Z_2 \dotsb Z_n\\
k_i&= Z_1 X_i, \qquad i \ne 1
\end{align}
(here and henceforth we use $k_i$ to denote reduced-state stabilizers and $K_i$ for the full state including the dealer).
Furthermore, the labelling is such that the reduced state has eigenvalue 1 for all
the stabilizers with $i\ne 1$.  Thus any product of stabilizers $k_1k_2^{l_2}\dotsb k_n^{l_n}$
will have eigenvalue $\omega^{-s}$.  As
\begin{equation}
(XZ^t_1)_1\prod_{i\ne 1} (X^{t_i}Z)_i=k_1k_2^{t_2}\dotsb k_n^{t_n}Z^{\left(t_D-t_1+\sum_{i\ne 1}\right)}_1
\end{equation}
the players measure such a product (and hence are able to recover $s$) if $t_1 = t_D + t_2 + \dotsb + t_n$,
which occurs with probability $1/d$.

Security against an eavesdropper having tampered with the qudits (e.g. an intercept-resend attack)
is established by noting that the combination of the player and dealer's measurements satisfies
\begin{equation}
(X_{t_D}Z)_D(XZ^t_1)_1\prod_{i\ne 1} (X^{t_i}Z)_i=K_D^{t_D} K_1 K_2^{t_2}\dotsb K_n^{t_n};
\end{equation}
i.e.\ the players and dealer can also combine their measurement statistics for the various values of $t_i$
to measure the stabilizers of, and hence verify, the original state.

\subsubsection{(2, 3) CQ protocol} \label{sec-CQ32}

We can construct a $(2,3)$ CQ protocol through extending the CC case by attaching the dealer's qudit to those of players 2 and 3 with weights of two and one respectively and setting all labels to zero, as illustrated in Fig.~\ref{fig-23total}. The stabilizers for this state are
\begin{align}
K_D&=X_DZ^{2}_2Z_3,\\
K_1&=X_1Z_2Z_3,\\
K_2&=Z^{2}_DZ_1X_2,\\
K_3&=Z_DZ^{2}_1X_3.
\end{align}
\begin{figure}
\includegraphics[height=4cm]{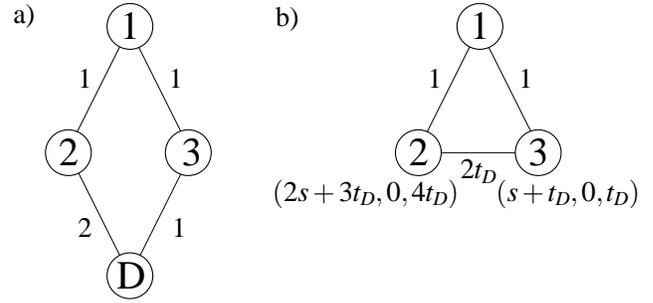}
\caption{a) The ``extended'' $(2,3)$ state used for the CQ protocol.  Local measurement on the dealer's qudit produces
a reduced state of the type in b)}\label{fig-23total}
\end{figure}
The dealer randomly selects a variable $t_D \in \F_d$ and measures $X^{t_D}Z$, and obtains result $\omega^s$. The resultant graph states are given by Proposition 5, as shown in Fig.~\ref{fig-23total}~b), with stabilizers
\begin{align} \label{eq-23stab}
k_{1,t_D}&=X_1Z_2Z_3,\\
k_{2,t_D}&=Z_1X_2Z_2^{4t_D}Z_3^{2t_D},\\
k_{3,t_D}&=Z_1Z_2^{2t_D}X_3Z_3^{t_D}.\label{eq-23stabend}
\end{align}

We now need to see how sets of players can or cannot access the random value $s$ of the dealer's measurement result. First, for any choice of $t_D$, no single player can access $s$ by application of Proposition 3 to Fig.~\ref{fig-23total} b).
To see how any pair can access the dealer's result, we must address each measurement parameter $t_D$ and resultant graph separately and see which measurement
 the players would ideally make. In the protocol, any given authorised set of players then choose randomly one of these possible measurements, which
 sometimes match that of the dealer and enable this set of players to access the secret and in addition provide security by effectively measuring a stabilizer of
 the total state Fig.~\ref{fig-23total} a).

We now go through how each pair can access $s$ if the dealer measures as above. Players $1$ and $2$ can access the dealer's result  by measuring
$$k_{1,t_D}^{-2t_D}k_{2,t_D},$$ allowing them to obtain $\omega^{-(2s+3t_D)}$. Players $1$ and $3$ would measure $$k_{1,t_D}^{-2t_D}k_{3,t_D},$$ allowing them to obtain $\omega^{-(s+t_D)}$.
Players $2$ and $3$ would measure $$k_{2,t_D}k_{3,t_D}^{-1},$$ allowing them to obtain $\omega^{-(s+2t_D)}$.

In this protocol, each authorised set will measure randomly from one of the three possible measurements corresponding to $t_D$.
For example, if players 1 and 2 decide to work together to obtain the secret, they will randomly measure $$k_{1,t_D'}^{-2t_D'}k_{2,t_D'}$$ by
 choosing their random variable $t_D'$ (agreed between players $1$ and $2$). By definition of the stabilizers these do not commute, hence cannot be
 measured simultaneously. Their measurements will coincide with the dealer's when $t_D'=t_D$, which happens with probability $1/d$. After announcing
 the bases they can then throw away all results that do not coincide.

Security can be proven by noting that including the dealer's measurement results allows the dealer plus an authorised set to simulate measurement
 of stabilizers of the original graph Fig.~\ref{fig-23total} a). An additional subtlety of simulating measurements locally in the qudit case,
 as apposed to the qubit case, is that there
 is more than one possible commuting Pauli.
Any local extended Pauli on a qudit $i$, $O_i$, commutes with all powers of this Pauli operator $O_i^m$, which means they share common eigenbases.
Further, as there are no degeneracies, the eigenvalues are just a re-ordering.
Thus, if measuring $O_i$ yields result $\omega^{s}$, we know that, even without doing any further measurement,
$O_i^m$ would give result $\omega^{ms}$. In this way, we can simulate any power of a Pauli $O^m_i$ just by measuring $O_i$.

In the protocol, some of the sifted key would be sacrificed now to check the security by announcing some randomly chosen results.
If we again look at players 1 and 2 for example, by combining their results with those of the dealer, they can simulate measuring stabilizers
$K_i$ of the original state (Fig.~\ref{fig-23total}a), equation (\ref{eq-23stab} - \ref{eq-23stabend}), since
\begin{align}
K_D^{2t_D}K_1^{-2t_D}K_2=\omega^{3t_D}(X_D^{t_D}Z_D)^2k_{1,t_D}^{-2t_D}k_{2,t_D}.
\end{align}
If there has been no interference from an eavesdropper, they will all measure outcome $1$ for, with high probability given a large number of announced results,
every value of $t_D$.
Although this outcome is not enough to guarantee that the state before
 measurement was the state of Fig.~\ref{fig-23total} a), it restricts the allowed space to that which guarantees security as follows: if $g$ is the subgraph
  got by removing player 3 and all its edges from Fig.~\ref{fig-23total} a), having eigenvalue $1$ for all these stabilizers implies that the state of the dealer
 plus players 1 and 2 is in the subspace spanned by $\{g_{\bm{z}=(i,i,0)}\}$ i.e.\ where players 1 and 2 have $z_1=z_2$ and players 3 has $z_3=0$.  Thus, the
most general state including an eavesdropper $E$ can be written
\begin{align}
\sum_i\alpha_i  \ket{i}_E\ket{g_{\bm{z}=(i,i,0)}}_{D,1,2},
\end{align}
It can then be checked that the reduced density matrix of the dealer and the eavesdropper is
\begin{align}
\rho_{ED}=\rho_E \otimes \one_D/d,
\end{align}
which implies that the eavesdropper cannot share any information about the dealer's results.
It can similarly be checked for pairs of players 2 and 3, and 1 and 3.

\subsubsection{(3,5) CQ protocol}
We can construct a $(3, 5)$ CQ protocol by extending the CC case by attaching the dealer's qudit to each player with weight one edges, and all labels set to zero, as illustrated in Fig.~\ref{fig-35a}. The stabilizers for this state are
\begin{align} \label{eq-35stab}
K_D&=X_DZ_1Z_2Z_3Z_4Z_5\\
K_1&=Z_DX_1Z_2Z_5\\
K_2&=Z_DZ_1X_2Z_3\\
K_3&=Z_DZ_2X_3Z_4\\
K_4&=Z_DZ_3X_4Z_5\\
K_5&=Z_DZ_1Z_4X_5\label{eq-35stabend}
\end{align}

\begin{figure}
\includegraphics[height=4cm]{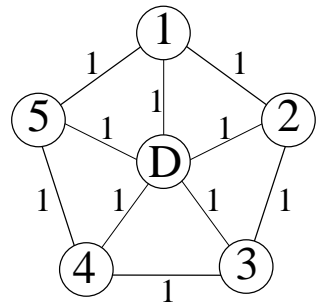}
\caption{The ``extended'' $(3,5)$ state used for the CQ protocol.  Local measurement on the dealer's qudit produces
a reduced state of the type in Fig.\ \ref{fig-35b}.}\label{fig-35a}
\end{figure}
\begin{figure}
\includegraphics[height=4cm]{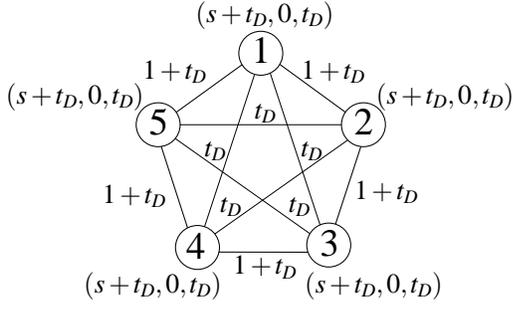}
\caption{Resultant state from the dealer's local measurement on Fig.\ \ref{fig-35a}.}\label{fig-35b}
\end{figure}

The dealer randomly selects a variable $t_D \in \F_d$ and measures $X^{t_D}Z$, and obtains result $\omega^s$. The resultant graph states are shown in Fig.~\ref{fig-35b}. The associated stabilizers of player $i$ are denoted $k_{i,t_D}$.

First we check that each pair of players cannot access $s$ without collaborating with the others. From Proposition 3, players 1 and 2 cannot access
the dealer's result (the secret) as explicitly
 by application of the stabilizers
\[
k_{4,t_D}^{-(1+st_D^{-1})}
\]
(shuffling from player 1 to 4) if $t_D\neq 0$ or
\[
k_{3,t_D}^{-s}k_{5,t_D}^{-s}
\]
(shuffling from 1 to 5 and 2 to 3) if $t_D=0$. We can
see that players 1 and 3 cannot access $s$ since the dependence can be removed by application of $$k_{2,t_D}^{-(s+t_D)(1+t_D)^{-1}}$$
 (shuffling from 1 to 2) if $t_D+1\neq 0$ or $$k_{4,t_D}^{-(1+st_D^{-1})}k_{5,t_D}^{-(1+st_D^{-1})}$$ (shuffling from 3 to 4 and 1 to 5)
 if $t_D+1=0$. By symmetry no pair alone can access $s$.

We now see how any three players can access the secret for a given $t_D$. Players 1, 2 and 3 can do so by measuring
\[
k_{1,t_D}^{-t_D}k_{2,t_D}^{1+2t_D}k_3^{-t_D},
\]
yielding result $\omega^{-(s+t_D)}$.
Players 1, 3 and 4 can do so by measuring
\[
k_{1,t_D}^{-(1+2t_D)} k_{3,t_D}^{1+t_D} k_{4,t_D}^{1+t_D},
\]
yielding result $\omega^{-(s+t_D)}$. By symmetry any three players can therefore find $s$.

In this protocol, any set of authorised players measures randomly in accordance with these procedures.
For example Players 1,2 and 3 will randomly measure stabilizers
\[
k_{1,t_D'}^{-t_D'}k_{2,t_D'}^{1+2t_D'}k_3^{-t_D'},
\]
by choosing a shared random variable $t_D'$. Security is then checked by verifying that the eigenequations
are satisfied by the dealer and the player, by communicating results to simulate measurement of a subset of stabilizers.
For example, the dealer plus players 1, 2 and 3 can, by comparing results, calculate what would have been the measured value for the stabilizers
(of the original graph Fig.~\ref{fig-35a}, equation (\ref{eq-35stab}-\ref{eq-35stabend})), because
\begin{align}
K_D^{t_D'}K_1^{-t_D'}&K_2^{1+2t_D'}K_3^{-t_D'}\\
=&\quad\omega^{t_D'}X_D^{t_D'}Z_Dk_{1,t_D'}^{-t_D'}k_{2,t_D'}^{1+2t_D'}k_3^{-t_D'}.
\end{align} 
If the eigenvalue is $1$ for all of these (as should be the case; if not the protocol restarts), this implies the joint state of the dealer
 plus players 1, 2 and 3 is in the subspace spanned by $\{ \ket{g_{\bm{z}=(i+j,i,0,j)}}_{D,1,2,3} \}$.
 As for the $(2,3)$ case, this implies that, for the most general state of an eavesdropper $E$ and dealer, their reduced density matrix is of the form
\begin{align}
\rho_{E,D}= \rho_E \otimes \one_D / d;
\end{align}
hence the eavesdropper's state is completely uncorrelated with the dealer's measurement result and therefore uncorrelated with the secret.
Security can similarly be checked for all sets of three or more players.
\subsection{The QQ scheme}
The QQ scheme proposed in \cite{MS08} is readily generalisable to qudits.
This protocol is similar to the CQ scheme, but the secret to be shared is now a
quantum state $\ket{s}$ in a $d$-dimensional Hilbert space, initially possessed by the dealer,
who distributes it to the other parties via a joint operation
on the secret state and the parties' shared graph state, in a manner analogous
to quantum teleportation.
We describe the general protocol explicitly below.

Denoting the dealer's secret qudit as 
\begin{equation}
\ket{s}_D = \sum_{i=0}^{d-1} \alpha_i \ket{i}_D,\label{eq-qqdq}
\end{equation}
the dealer prepares the state
\begin{equation}
\ket{s}_D \ket{G}_{D,V}
\end{equation}
corresponding to some graph state $G$ for the dealer's qudit $D$ and all the players'
qudits $V$.  The dealer distributes the players' qudits to them (over perfect quantum channels,
which we can assume to be public).  The dealer then measures her two qudits in the generalized
Bell basis \cite{PhysRevLett.70.1895} $\{ \ket{\psi_{mn}}\}$, where
\begin{equation}
\ket{\psi_{mn}} \defeq \frac{1}{\sqrt{d}} \sum_j \omega^{jn} \ket{j} \ket{j + m}.
\end{equation}

If the dealer's measurement result is $(m,n)$, corresponding to the state $\ket{\psi_{mn}}$,
then it follows from our rules for projective measurement that
the resultant state for all parties is
\begin{align}
&\ket{\psi_{mn}}_D\bra{\psi_{mn}} \ket{s}_D \ket{G}_{D,V}\nonumber\\
 \propto\quad&\ket{\psi_{mn}}_{D} \sum_j \alpha_j \omega^{-jn} \ket{g_{\bm{z}=(j+m)(A_{D1},A_{D2}\ldots,A_{DN})}}_V
\end{align}
where $\ket{g_{\bm{ z}}}$ is the encoded reduced graph state on the players $1,
\dotsc, n$ with labels $\bm{z}$.

If the dealer informs the players of their measurement result $(m,n)$, then
a set of players $\in V$ (including some player $a$ with $N_{Da} \ne 0$) can apply
a correction operator
\begin{equation}
 U_{mn} \defeq K_a^{-nN_{Da}^{-1}} Z^{-m \bm{A_D}}
\end{equation}
to obtain the state
\begin{equation}
\ket{s_g}^V = \sum_j \alpha_j \ket{g_{\bm z = j(A_{D1},\ldots A_{DN})}}_V.
\end{equation}
The access properties of this final state depend on the graph state used.  Qualitatively,
for certain initial graph states, the state $\ket{s_g}^V$ can be regarded as a superposition (with coefficients corresponding
to those of the encoded secret) of orthogonal labelled graph states whose labels have the same access structure as CC protocols.
Thus, the ability to recover the quantum secret corresponds to the ability to recover these classical labels, providing
a natural extension of the classical protocols to the quantum case.

However, the above reasoning does not guarantee that such a QQ protocol will have the same adversary structure
as the corresponding CC protocol, as seen below, where we achieve ``perfect'' threshold schemes in the $(2,3)$
and $(5,3)$ cases, but fewer than $n$ players can still obtain some secret information in the $(n,n)$ case.

\subsubsection{$(n,n)$ QQ protocol using the $n$GHZM state}
If the joint graph state of the dealer and players $\ket{G}$
is the same as for the CQ scheme (i.e.\ an $n$GHZM state with an additional dealer's vertex,
connected to player 1, and all labels set to zero), then the final players' state is
\begin{equation}
\ket{s_g}_V = \sum_j \alpha_j \ket{g_{\bm{z} = (j,0,\ldots,0)}}_V
\end{equation}
where $V$ is the set of all players.
The $\ket{g}$ states, in the form of $n$GHZM states with a label on player 1's vertex,
are the same as those used in the $(n,n)$ CC protocol, to which the access properties
described in Sec.\ \ref{sec-CCnn} apply, and fewer than $n$ players cannot perfectly
reconstruct the secret.

However, we note that the reduced states of fewer than $n$ players are not
independent of the $\alpha_j$.  For example, in the analogous qubit protocol \cite{MS08}
we can, in some basis (depending on the player), express the reduced state of a single player as
$\rm{diag}(|\alpha_0+\alpha_1|^2,|\alpha_0-\alpha_1|^2)$.
Thus this $(n,n)$ protocol is not a ``perfect'' threshold scheme; it allows some
incomplete information to be accessed by fewer than $n$ players.

However the secret qudit is clearly encoded within the state of the $n$ players and so may
be accessed by all players by performing joint quantum operations.  Furthermore,
the secret can be isolated to any individual player through LOCC,as follows.

For any player $i$ other than player 1, the secret can be isolated by
player 1 measuring $X$ and the rest of the players other than $i$ measuring $Z$,
which is evident by rewriting the state as
\begin{equation}
\ket{s_g}_V = \sum_j \ket{g_{\bm{z} = (j,0,\ldots,0)}}_{\mathsf{V} \setminus \mathsf{v}_i} \Bigl(
\sum_k\alpha_k \ket{j-k}_i \Bigr).
\end{equation}

The secret can be isolated to player 1 by every player other than 1 measuring
$Z$, thereby producing the state
\begin{align}
 \ket{s_g'}_V =\quad&\sum_{k_2} \dotsb \sum_{k_n} \sum_j \alpha_j\nonumber\\
&\times\ket{\overline{-(j+k_2+\dotsb+k_n)}}_1 \ket{k_2}_2 \dotsb \ket{k_n}_n.
\end{align}

Thus the quantum secret requires collaboration by all $n$ players (and no fewer) to recover but
the players may collaboratively make the secret locally accessible to a single player
if they so wish.

\subsubsection{(2,3) QQ protocol} \label{sec-QQ32}
We can construct a $(2,3)$ QQ protocol by teleporting the secret qudit (\ref{eq-qqdq}) into the extended graph state of Fig.~\ref{fig-23total}. The encoded state is
\begin{align}
\ket{s_g}_V = \sum_{j=0}^{d-1} \alpha_j  \ket{g_{\bm{z}=j(0,2,1)}}\label{eq-qq32}
\end{align}
for $g$ the (unlabelled) graph in Fig.~\ref{fig-32cc}.  The state (\ref{eq-qq32}) can be rewritten as
\begin{align}
\ket{s_g}_V &=  \frac{1}{\sqrt{d}}\sum_{j,k=0}^{d-1} \alpha_j  \ket{k}_1\ket{\overline{2j+k}}_2\ket{\overline{j+k}}_3\nonumber\\
&=\frac{1}{\sqrt{d}}\sum_{j,M=0}^{d-1} \alpha_j  \ket{M-j}_1\ket{\overline{M+j}}_2\ket{\overline{M}}_3
\end{align}
where $M=j+k$.

For $d=3$ this state corresponds exactly to the $(2,3)$ qutrit code presented in \cite{CGL99}. Any pair of players can thus access the secret,
and any single player cannot. For example, players 1 and 2 perform control operations that add the value of the first register (without an overbar)
to the second register (with an overbar), and then
$-\frac{1}{2}$ times the value of the second to the first (again, keeping the first register non-barred and the second always barred). This leaves the resultant state as
\begin{align}
\frac{1}{\sqrt{d}}\sum_{j=0}^{d-1}\alpha_j\ket{-j}_1\sum_{M=0}^{d-1}\ket{\overline{2M}}_2\ket{\overline{M}}_3.
\end{align}
Any pair can similarly obtain the secret.  Since single players form complementary sets to pairs of players, it follows from information gain implying disturbance \cite{CGL99} (and can be shown explicitly) that no single player can obtain any information about the secret, so this is a perfect
threshold scheme.

\subsubsection{(3, 5) QQ protocol} \label{sec-QQ53}

We can construct a $(3,5)$ QQ protocol by teleporting the secret qudit into the extended graph state of Fig.~\ref{fig-35a}. The encoded state is then
\begin{align}
\ket{s_g}_V = \sum_{j=0}^{d-1} \alpha_j\ket{g_{\bm{z}=(j,j,\ldots j)}}
\end{align}
where $g$ is the (unlabelled) graph of Fig.~\ref{fig-53cca}.
The state can be rewritten as
\begin{align}
\ket{s_g}_V =\quad&\frac{1}{d} \sum_{x,y = 0}^{d-1}\ket{x}_1\ket{y}_2\nonumber\\
&\times\left(\sum_{j=0}^{d-1}\alpha_j\omega^{j(x+y)+xy}\ket{h_{\bm{z}=(j+x,j,j+y)}}_{3,4,5}\right)
\end{align}
where $h$ is the subgraph corresponding to players 3, 4 and 5 taken by removing all other vertices and their edges. The fact that each vector $\bm{z}$
occurs only once means that players 3, 4 and 5 can always measure in such a way to project them only onto the a subspace corresponding to some $|xy\rangle_{1,2}$.
Each would occur with equal probability of $1/d^2$. 

If for example their measurements put them onto the $\ket{h_{\bm{z}=(j,j,j)}}_{3,4,5}$ subspace (corresponding to $x=y=0$),
they know from the above that they would obtain the state
\begin{align}
\sum_{j=0}^{d-1} \alpha_j\ket{h_{\bm{z}=(j,j,j)}}_{3,4,5},
\end{align}
hence would have the secret encoded into the graph state $|h\rangle$. Suitable global operations could then be used to map the state onto any subsystem they liked.
For example performing $(C_{ab})^{-1}$ between players $3$ and $4$, as well as between players $4$ and $5$, gives the state
\begin{align}
\sum_{j=0}^{d-1} \alpha_j |\bar{j}\rangle_3|\bar{j}\rangle_4|\bar{j}\rangle_{5}.
\end{align}
Players $3$ and $4$ can measure in their respective $Z$ bases, to shuffle the information to player 5. Similarly it can be shuffled to any player.

To see how players 2, 4 and 5 can access the secret qutrit, we rewrite the state as
\begin{align}
\ket{s_g}_V =\quad&\frac{1}{d}\sum_{x,y = 0}^{d-1}\ket{x}_1\ket{y}_2\nonumber\\
&\times\left(\sum_{j=0}^{d-1}\alpha_j\omega^{j(x+y)}\ket{h_{\bm{z}=(j+x+y,j+x,j+y)}}_{2,4,5}\right)
\end{align}
where $h$ is the subgraph of players 2, 4 and 5 found by removing all other vertices and their edges. Each possible vector $\bm{z}$ occurs only once in this expansion.
By the same tactics players 2,4 and 5 can access the secret. By symmetry, any three can access the secret.  As with the $(2,3)$ case, perfect
secret recovery by 3 players requires that the complementary sets of 2 players cannot obtain any information about the secret (as can also be shown explicitly), so this is a perfect
threshold scheme.

We note that the existence of a $(3,5)$ QQ protocol in graph states of arbitrary dimension has been implied by the work of Gheorghiu, Looi and Griffiths \cite{GLG09},
who showed that encoded information can be isolated to 3 of the 5 qudits in a graph state.

\section{Conclusion}
We have developed a unified framework for the three types of secret sharing in prime-dimensional Hilbert space.  Our formalism uses prime-dimensional
graph states, analogous to those in the qubit case, and our protocols allow sharing of classical ``dits'' and prime-dimensional quantum states as secrets.
Our protocols include the case of $(2,3)$ sharing, which was not achievable in the qubit formalism in addition to higher-dimensional analogues of qubit
protocols.  Our work provides a useful step towards describing schemes sharing classical and quantum secrets of any size within a graph-state formalism for general physical
systems.

Our work suggests several areas for further research.  Still open are the questions of whether any achievable threshold secret sharing scheme
can be constructed within our formalism, and, if so, what the associated graph states and protocols would be.  If prime-dimensional systems
are not sufficient to construct any scheme, it is possible that an analogous formalism involving states of any dimension might allow for a
richer array of protocols, as we have shown occurs when moving beyond the qubit case.  Generalising to the case of arbitrary $d$ is thus
another promising area for further work, and some of the methods described in \cite{GLG09} (which considers such general graph states) may be useful here.

With regard to achieving arbitrary threshold schemes, we additionally note that the graph state formalism can in principle be used for sharing of quantum secrets
for arbitrary access structures. This property of the formalism stems from the fact that schemes exist for arbitrary access structures using high-dimensional stabilizer states
 \cite{CGL99,Gott00}, and that it can be shown that all stabilizer states are locally equivalent to qudit graph states \cite{Schlingemann01,BB06}.
 However there is no explicit procedure for converting these schemes to graphs, and further, it is not immediately clear that they can also be used
 for sharing classical secrets (though it seems very likely).  Developing such a procedure would provide a graph state formalism for arbitrary
access structures in quantum secret sharing, although it may not be equivalent to that in our current work.

There are also several possible extensions to the analysis of CQ protocols within our formalism: considering more general attacks involving dishonest participants,
the presence of channel noise, and the overall efficiency of the protocol; i.e. the need to discard states when the participants' bases do not match.  It is an
open question which graph states would produce the highest key rates (and hence secret-sharing rates) per state within our formalism.

Further generalisation may allow a wide range of secret-sharing schemes to be described in this intuitively-appealing way.\\

We thank Gilad Gour for valuable discussions on the limitations of the $(n,n)$ QQ protocol.
This work was supported by NSERC, QuantumWorks, MITACS, and USARO.  DM acknowledges financial support from ANR
Projet FREQUENCY (ANR-09-BLAN-0410-03).  BF is partially supported by a 
Pacific
Institute for Mathematical Sciences Postdoctoral Fellowship, and BCS
is partially supported by a Canadian Institute for Advanced Research
Fellowship.

\bibliographystyle{apsrev4-1}
\bibliography{gqudits}

\end{document}